\newtheorem{theorem}{Theorem}
\providecommand{\keywords}[1]{\textbf{\textit{Keywords---}} #1}
\begin{document}

\title{Short Combinatorial Proof that the DFJ Polytope is contained in the MTZ Polytope for the Asymmetric Traveling Salesman Problem}

\author{Mark Velednitsky \thanks{Fellow, National Physical Science Consortium} \\ UC Berkeley \\ marvel@berkeley.edu}

\date{}

\maketitle

\begin{abstract}
For the Asymmetric Traveling Salesman Problem (ATSP), it is known that the Dantzig-Fulkerson-Johnson (DFJ) polytope is contained in the Miller-Tucker-Zemlin (MTZ) polytope. The analytic proofs of this fact are quite long. Here, we present a proof which is combinatorial and significantly shorter by relating the formulation to distances in a modified graph.
\end{abstract}
\keywords{salesman, polytope, mtz, subtour, combinatorial}

\section{Introduction}
The Asymmetric Traveling Salesman Problem (ATSP) on the graph $G = (V, A)$ is typically formulated as an Integer Program (IP) by assigning each arc $(i,j)$, of weight $c_{ij}$, a binary variable $x_{ij}$ indicating whether or not it participates in the tour:
\begin{align}
	\text{minimize }  &\sum_{(i,j)\in A}{c_{ij}x_{ij}} & \nonumber \\
	\text{subject to }  &\sum_j{x_{ij}} = 1 \ \ &\forall i \in V \nonumber \\
	&\sum_i{x_{ij}} = 1 \ \  &\forall j \in V \nonumber \\
	&\text{no sub-tours in} \ \{(i,j)|x_{ij}=1\} \ \ & \label{eqline:subtour} \\
	&x_{ij} \in \{0,1\} \ \ &\forall (i,j) \in A \nonumber \\
	\nonumber
\end{align}
Several variants of the sub-tour elimination constraint (\ref{eqline:subtour}) have been proposed. The DFJ constraints are:
\begin{equation}
\label{eq:DFJ}
\sum_{i \in Q}{\sum_{j \in Q}{x_{ij}}} \leq |Q|-1
\end{equation}
for any $Q \subseteq \{2,3,\ldots,n\}$.
The MTZ constraints introduce a new variable $u_i$ at each node $i \in V$ such that \cite{miller1960integer}:
\begin{equation}
\label{eq:MTZ}
u_i - u_j + n x_{ij} \leq n-1 \ \forall (i,j) \in A
\end{equation}

The $u_i$ are meant to enumerate the order in which nodes appear in the tour. That is, $u_i = 1$ for the first node, $u_i = 2$ for the second, and so on.

The DFJ and MTZ polytopes are the feasible regions of the respective LP relaxations. It is known that the MTZ formulation produces a weaker LP relaxation. However, rigorous proofs of this fact are quite involved \cite{desrochers1991improvements,gouveia1999asymmetric,langevin1990classification,padberg1991analytical}.

Even though they are weaker, MTZ-like constraints have been applied to Vehicle Routing Problems and are popular for solving small instances of ATSP \cite{bektacs2014requiem}. Having a concise proof of their weakness could be instructive for understanding the constraints, teaching them, and applying them elsewhere \cite{pataki2003teaching}.

\section{Proof}
\begin{theorem}
The DFJ polytope is contained in the MTZ polytope.
\end{theorem}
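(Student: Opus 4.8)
The plan is to show that every $x$ in the DFJ polytope admits auxiliary values $u_i$ for which the MTZ inequalities~(\ref{eq:MTZ}) hold. Rewriting~(\ref{eq:MTZ}) as $u_i - u_j \le (n-1) - n\,x_{ij}$ reveals this to be a question about the existence of a feasible \emph{potential}: such a $u$ exists if and only if, in the digraph whose arc $(i,j)$ carries length $w_{ij} := (n-1) - n\,x_{ij}$, there is no cycle of negative length. So the whole proof should reduce to a cycle-length estimate, and this is exactly where the DFJ inequalities enter. As is standard, the MTZ inequalities are imposed only on arcs with both endpoints in $\{2,\dots,n\}$ --- otherwise even a Hamiltonian tour would violate them --- so the relevant digraph $H$ lives on the vertex set $\{2,\dots,n\}$.

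First I would verify that $H$ has no cycle of nonpositive length. Let $C$ be a directed cycle in $H$ using the vertex set $Q \subseteq \{2,\dots,n\}$ with $|Q| = k \le n-1$. Since $x \ge 0$, every arc of $C$ lies inside $Q$, so the DFJ inequality~(\ref{eq:DFJ}) for $Q$ applies and
\[
\sum_{(i,j)\in C} w_{ij} \;=\; k(n-1) - n\sum_{(i,j)\in C} x_{ij} \;\ge\; k(n-1) - n\sum_{i\in Q}\sum_{j\in Q} x_{ij} \;\ge\; k(n-1) - n(k-1) \;=\; n-k \;\ge\; 1.
\]
Thus every cycle of $H$ has positive length; in particular there is no negative cycle, so shortest-path distances in $H$ are well defined.

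Next I would fix a root $r \in \{2,\dots,n\}$ (adding, if $G$ is not complete, a dummy source with zero-length arcs into every vertex of $\{2,\dots,n\}$ so that all distances are finite) and let $d_i$ be the shortest-path distance to $i$. Setting $u_i := -d_i$ for $i \in \{2,\dots,n\}$ (and $u_1$ arbitrary, say $u_1 := 1$), the triangle inequality $d_j \le d_i + w_{ij}$ becomes $-u_j \le -u_i + (n-1) - n\,x_{ij}$, i.e.\ $u_i - u_j + n\,x_{ij} \le n-1$, which is precisely~(\ref{eq:MTZ}). The cases $n \le 2$ are vacuous.

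I expect the real content to be the reformulation in the first paragraph --- identifying the MTZ variables $u_i$ with negated shortest-path potentials in the graph weighted by $(n-1) - n\,x_{ij}$. Once that viewpoint is in place, the only thing left to prove is the absence of negative cycles, and the displayed one-line bound shows this is nothing more than the DFJ constraint rearranged, so there is no serious obstacle beyond seeing the right picture. The one detail to keep straight is the special status of node $1$: DFJ's restriction to $Q \subseteq \{2,\dots,n\}$ matches exactly the arcs on which MTZ is imposed, so no separate handling of node $1$ is needed beyond keeping it out of $H$.
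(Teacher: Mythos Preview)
Your proof is correct and follows essentially the same route as the paper: assign arc lengths $(n-1)-n\,x_{ij}$, take $u_i$ to be negated shortest-path distances, and use the DFJ inequality to rule out negative cycles so that these distances are well defined. Your version is in fact slightly more careful than the paper's, since you restrict the auxiliary digraph to $\{2,\dots,n\}$ (and add a dummy source for reachability), which is exactly the vertex set on which the DFJ constraints are available; the paper roots at node~$1$ and glosses over this point.
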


\begin{proof}
Let $x_{ij}$ be feasible for formulation DFJ. We define a new graph $G$ where the arc weights are $(n-1) - n x_{ij}.$ We let $-u_j$ be the length of the shortest path from $1$ to $j$ in $G$. We claim that these $u_j$ are well-defined and make the $u_j$ and $x_{ij}$ together satisfy formulation MTZ. To check that MTZ is satisfied, we write the shortest path condition in $G$: 
$$-u_j \leq -u_i + (n-1) - n x_{ij} \implies u_i - u_j + n x_{ij} \leq (n-1).$$ 
To confirm that the $u_j$ are well-defined, we need to prove there are no negative-cost cycles in $G$. Assume there is a negative cost cycle with edge set $C$ with node set $Q$:
$$\sum_{(i,j) \in C}{((n-1) - n x_{ij})} < 0 \implies |Q| (n-1) - n \sum_{(i,j) \in C}{x_{ij}} < 0 \implies |Q| \frac{n-1}{n} < \sum_{(i,j) \in C}{x_{ij}}.$$
But the conditions of formulation DFJ give us
$$\sum_{(i,j) \in C}{x_{ij}} \leq |Q| - 1.$$
This is a contradiction (since $|Q| = |C| \leq n$), so there are no negative cost cycles.
\end{proof}

\bibliography{salesman_polytopes_ref}
\bibliographystyle{plain}

\end{document}